\newtheorem{theorem}[subsection]{Theorem}
\newtheorem{lemma}[subsection]{Lemma}
\newtheorem{proposition}[subsection]{Proposition}
\newtheorem{corollary}[subsection]{Corollary}
\theoremstyle{definition}
\newtheorem{remark}[subsection]{Remark}
\newtheorem{example}{Example}
\numberwithin{equation}{section} \setcounter{tocdepth}{1}
\newcommand{\bea}{\begin{eqnarray}}
\newcommand{\eea}{\end{eqnarray}}
\def\bm{{\mathbb M}}
\newcommand{\norm}[1]{\left \lVert #1 \right \rVert}
\newcommand{\innerproduct}[1]{\langle #1 \rangle}
\numberwithin{equation}{section}
\begin{document}

\begin{center}
{\bf \LARGE {Unital Kadison-Schwarz Maps}}\\[0.5cm]

{\large {\sc {Hajir Al Zadjali$^{1,a}$, Farrukh Mukhamedov$^{2,b,*}$}}}\\[0.5cm]
\end{center}
$^1$  Department of Physics, College of Science, \\
United Arab Emirates University,\\
 P.O. Box 15551, Al Ain,  Abu Dhabi, UAE\\[0.2cm]
$^2$  Department of Mathematical Sciences, College of Science, \\
United Arab Emirates University,\\
 P.O. Box 15551, Al Ain,  Abu Dhabi, UAE\\[0.4cm]
{\small $^a$ e-mail: {\tt alzadjalihajir@gmail.com}\\
$^b$ e-mail: {\tt farrukh.m@uaeu.ac.ae; far75m@gmail.com}\\
$^*$ Corresponding author}

\small
\begin{center}
{\bf Abstract}
\end{center}
Quantum entanglement is an important phenomenon in quantum information theory. To detect entanglement theoretically, positive but not completely positive maps are used. The Kadison-Schwarz (KS) inequality interpolates between positivity and complete positivity. KS maps may be key to understanding and detecting entanglement. We provide a description of a subset of KS maps on $M_2(\mathbb{C})$ that are unital. This allows for the classification of a wider class of positive maps than the well known bistochastic maps. We derive the conditions for a unital map to be a KS map, and provide non-trivial examples of such a map.

\vskip 0.3cm \noindent
{\it Mathematics Subject Classification}: 37P05, 46L35, 46L55\\
{\it Key words}: Positive maps; qubit; Kadison-Schwarz inequality; unital maps.

\normalsize

\section{Introduction}

Entanglement is a central feature of quantum information theory as well as quantum technologies such as quantum cryptography and quantum computing \cite{QIT}. Despite its advent, entanglement theory remains fraught with challenges. One main challenge is the classification of states as separable or entangled. Quantum inseparability implies the existence of pure entangled states which produce non-classical phenomena \cite{horodecki_necessary_1996}. The mathematics of the separability of states is deeply rooted in the study of positive maps over an algebra of matrices on finite dimensional Hilbert spaces. Positive maps are crucial in the algebraic approach to quantum physics; when normalized they define affine mappings between sets of states of $\mathbb{C}^*$ algebras \cite{BZ,chruscinski_entanglement_2014,BC24}. In particular, a positive map, $\Phi,$ between two matrix algebras is a linear map that maps positive definite matrices to positive definite matrices. Thus, any transformation between quantum states which are given by density operators is described by a positive map. For low dimensional\footnotemark{}\footnotetext{Low dimensional systems here refer to bipartite and tripartite quantum systems.} composite systems, the Peres-Horodecki criterion provides necessary and sufficient conditions for the separability of bipartite systems that rely on positivity of partial transposition (PPT)\cite{horodecki_necessary_1996,peres_separability_1996,horodecki_information-theoretic_1996,horodecki_mixed-state_1998}. A state given by the density matrix $\rho \in \mathcal{H}_A \otimes \mathcal{H}_B$ is said to be separable if
\begin{equation}
   (\mathbb{I}_A \otimes \Phi)(\rho) \geq 0,
\end{equation} for all linear positive maps $\Phi: \mathcal{B}(\mathcal{H}_B) \rightarrow \mathcal{B}(\mathcal{H}_B)$. Conversely, for higher dimensional systems,  despite attempts made by several authors \cite{guhne_entanglement_2009, horodecki_quantum_2009}, there is still no single universal separability condition. In fact, it was proven that determining whether an unknown higher dimensional state is separable is classified as an NP hard problem \cite{gurvits_classical_2004}.

An important class of positive maps are completely positive (CP), non-trace increasing maps which provide representations of quantum operations such as measurements. When coupled with trace-preservation, CP maps serve as a representation of quantum channels which describe the dynamics of open quantum systems. Moreover, the structure of positive maps is crucial to mathematicians, since determining the positivity of a map is challenging even in low dimensions.

  It follows that a weaker notion of positivity is needed to detect entanglement theoretically \cite{Paulsen,Stormer,BHATIA,PR}.
This weaker notion of positivity than complete positivity is supplied by the \textit{Kadison-Schwarz (KS)} inequality \cite{Kadison}
\begin{equation}\label{KS}
    \Phi(X)^*\Phi(X) \leq \Phi(X^*X).
\end{equation}
The first example of a Schwarz map $\Phi : \mathcal{M}_2 \to \mathcal{M}_2$ which is not completely positive was provided by Choi \cite{Choi2}. Interestingly, unital positive maps satisfy a weaker condition, i.e. \eqref{KS} for all $X=X^*$, called a Kadison inequality \cite{Kadison}.
However, in quantum information theory, the KS property alone makes it possible to prove a number of monotonicity findings \cite{Petz,Alex,Carlen}. It turns out that the KS property  is enough for the proper description of the asymptotic behavior of  open quantum systems \cite{Amato,C24}. Recently, the notion of KS-divisibility has been proposed to analyze the Markovianity of open quantum dynamics \cite{KS},\cite{Mario}. Additionally, a recent study \cite{GF} demonstrated that the relaxation rates of Markovian semigroups of qubit unital KS maps adhere to a universal constraint, which modifies the corresponding constraint for completely positive semigroups \cite{PRL,LAA}.
In the literature, KS maps have been studied mainly with respect to positive, unital maps which are trace-preserving; collectively these properties define a \textit{bistochastic operator}. In particular, KS maps and their relation to $k$-positivity have been investigated in \cite{robertson_schwarz_1983,Sun1,Sun2}. A description of bistochastic KS maps defined on the algebra of $2\times 2$ matrices $M_2(\mathbb{C})$ has been provided in \cite{MA21,MA13}.  The most recent work on KS maps was done in \cite{Alex} where they characterized Schwarz maps by tracial inequalities. Furthermore, a recent analysis of a class of unital qubit maps with diagonal unitary and orthogonal symmetries was performed in \cite{CB24,C24}. Such maps already found a lot applications in quantum information theory.

It is natural to provide criterion to describe unital KS maps, but not trace-preserving ones. In general, this is an open problem. In the present paper, we are going to discuss the mentioned problem within qubit maps. By discarding the trace-preservation property, we will provide a wider class of positive maps, than bistochastic maps and by replacing complete positivity with the KS property we may provide a better description of the dynamics of open quantum systems. Thus, the aim is to find conditions that make a positive, unital map a KS map, and provide non-trivial examples of such a map. In Section 2 we provide the necessary definitions, and useful results. In Section 3, we describe positive, unital maps on $M_2(\mathbb{C})$. In Section 4, we derive the conditions for a positive, unital map to be a KS map, and provide non-trivial examples such a map. Moreover, the proposed unital KS maps allow to investigate KS-approximation of positive maps which gives rise to an entanglement witness (see \cite{CMH20,Werner}).

\section{Preliminaries}

Let $\mathcal{H}_d$ be a finite dimensional Hilbert space of dimension $d$, and denote the space of bounded linear operators in $\mathcal{H}_d$ by $\mathcal{B}(\mathcal{H}_d)$. A linear map $\Phi:\mathcal{B}(\mathcal{H}_d) \rightarrow \mathcal{B}(\mathcal{H}_d)$ is
\begin{enumerate}[(i)]
    \item \textit{positive} if $\Phi(x)\geq 0$, when $x\geq0$;
    \item \textit{unital} if $\Phi(\mathbb{I})=\mathbb{I};$
    \item \textit{$k$-positive} if $\mathbb{I}_k \otimes \Phi:M_k(\mathbb{C}) \otimes \mathcal{B}(\mathcal{H}_d) \rightarrow M_k(\mathbb{C}) \otimes \mathcal{B}(\mathcal{H}_d)$ is positive;
    \item \textit{completely positive} if it is $k$-positive for all $k\in \mathbb{N}.$
\end{enumerate}

\noindent Equivalently, $\Phi$ is said to be positive if $\Phi(x^*x)\geq 0$ for all $x\in \mathcal{B}(\mathcal{H}_d).$ The infamous Cauchy-Schwarz inequality was generalized by Kadison for linear maps over operator algebras \cite{Kadison}. The result was Kadison's inequality
\begin{equation*}
    \Phi(X^2)\geq \Phi(X)^2,
\end{equation*} which is satisfied by all self-adjoint elements $X=X^*$ for a given positive and unital map $\Phi$. Kadison's inequality was generalized for any normal operator $X$ by Choi \cite{choi_positive_1975}, who showed that the inequalities,
\begin{equation}\label{KS1}
    \Phi(X^*X)\geq \Phi(X^*)\Phi(X), \qquad  \Phi(X^*X)\geq \Phi(X)\Phi(X^*),
\end{equation}are satisfied by positive and unital maps $\Phi$. It follows that every unital map $\Phi$ that satisfies \eqref{KS1} is known as a \textit{Kadison-Schwarz (KS) map}. Evidently, every KS map is positive, but the converse is not necessarily true. Moreover, KS maps interpolate between positive maps and completely positive (CP) maps.

\noindent Let $M_2(\mathbb{C})$ be the algebra of $2 \times 2$ matrices with complex entries. This is the algebra of qubit systems which are widely used in quantum informational work. It is known \cite{horodecki_necessary_1996} that positive but not completely positive map on $M_2(\mathbb{C})$ is the transposition map. In fact, there exists the following hierarchy for positive, unital maps on a $d$-dimensional Hilbert space $\mathcal{H}^d$ \cite{KS},
 \begin{equation}
     CP=P_d \subset P_{d-1}\subset ... \subset P_2 \subset KS \subset P_1,
 \end{equation}where $P_k$ denotes $k$-positive maps.

\noindent Let $\mathcal{KS}(\mathcal{B}(\mathcal{H}_A), \mathcal{B}(\mathcal{H}_B))$ be the set of all KS maps from $\mathcal{B}(\mathcal{H}_A)$ to $\mathcal{B}(\mathcal{H}_B)$. It was shown in \cite{MA2010} that the set $\mathcal{KS}$ is convex, and for unitaries $U,V \in \mathcal{B}(\mathcal{H}_A)$ and $\Phi \in \mathcal{KS}$, the map $\Psi_{U,V}(x)=U\Phi(VxV^*)U^* \in \mathcal{KS}.$

\noindent In this paper, we will consider unital, KS maps on $M_2(\mathbb{C})$ that are non-trace preserving, i.e $Tr(\Phi(x))\neq Tr(x)$ for some $x\in M_2(\mathbb{C})$, and $Tr$ denotes the trace on $M_2(\mathbb{C})$.\\

 \section{Positive Unital Maps on $M_2(\mathbb{C})$}
 In this section, we will provide a description of positive unital maps on $M_2(\mathbb{C})$. We start by introducing the underlying matrix algebra for the qubit space.
 A well known basis for $M_2(\mathbb{C})$ is given by the $2\times 2$ identity matrix $\mathbb{I}$ and the Pauli matrices, i.e., $\{\mathbb{I}, \sigma_1, \sigma_2, \sigma_3 \}$, where
 \begin{equation}
 \mathbb{I}=\begin{pmatrix}
     1 & 0\\
     0 & 1
 \end{pmatrix}, \quad \sigma_1=\begin{pmatrix}
     0 & 1\\
     1 & 0
 \end{pmatrix}, \quad \sigma_2=\begin{pmatrix}
     0 & -i\\
     i & 0
 \end{pmatrix}, \quad \sigma_3=\begin{pmatrix}
     1 & 0\\
     0 & -1
 \end{pmatrix}.
 \end{equation}
 \noindent Every matrix $A\in M_2(\mathbb{C})$ can be expressed as a linear combination of the basis matrices: $A=w_0\mathbb{I} + \bm{w} \cdot \sigma$, where $w_0 \in \mathbb{C}$, and $\bm{w}=(w_1, w_2, w_3) \in \mathbb{C}^3.$ More specifically, \begin{equation}
      \bm{w} \cdot \sigma=w_1\sigma_1+w_2\sigma_2+w_3\sigma_3.
  \end{equation}

\noindent For $A\in M_2(\mathbb{C})$ the following assertions hold true \cite{king_minimal_2001}:
     \begin{enumerate}[(i)]
         \item $A$ is self-adjoint if and only if $w_0$ and $\bm{w}$ are real;
         \item $A \geq 0$ if and only if $w_0$ and $\bm{w}$ are real and $\norm{\bm{w}} \leq w_0$, where $$\norm{\bm{w}}=\sqrt{|w_1|^2+|w_2|^2+|w_3|^2}.$$
     \end{enumerate}

\noindent Every linear map $\Phi:M_2(\mathbb{C}) \rightarrow M_2(\mathbb{C})$ may be represented in the aforementioned basis by a unique $4\times 4$ matrix $\bm{F}.$ It is unital if and only if
\begin{equation}
    \bm{F}=\begin{pmatrix}
        1 & \bm{\lambda}\\
        \bm{0} & \bm{T}
    \end{pmatrix},
\end{equation}where $\bm{T}$ is a $3\times 3$ matrix, $\bm{\lambda}=(\lambda_1,\lambda_2,\lambda_3)$, and $\bm{0}=(0,0,0),$ and one has \begin{equation}\label{34}
         \Phi(w_0\mathbb{I}+\bm{w}\bm{\sigma})=(w_0+\bm{w}\bm{\lambda})\mathbb{I}+(\bm{T}\bm{w})\bm{\sigma}.
     \end{equation}
\noindent Thus, any unital map $\Phi:M_2(\mathbb{C}) \rightarrow M_2(\mathbb{C})$ has the form (3.4). We notice that if $\bm{\lambda}=\bm{0}$, then $\Phi$ becomes trace-preserving. \\

\begin{lemma}
       Let $\Phi:M_2(\mathbb{C}) \rightarrow M_2(\mathbb{C})$ be a an adjoint preserving unital linear map. Then $\bm{\lambda},\bm{T}$ are real.
      \end{lemma}

\begin{theorem}
      Let $\Phi$ be a unital linear map given by \eqref{34}. Then $\Phi$ is positive if and only if $\norm{\bm{Tw}} \leq 1+ \bm{\lambda w},$ for all $\norm{\bm{w}} \leq 1$, and $\bm{T}, \bm{\lambda}$ are real.
  \end{theorem}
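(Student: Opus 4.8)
The plan is to translate positivity of $\Phi$ into the coordinate language of assertion (ii), under which a self-adjoint element $A=w_0\mathbb{I}+\bm{w}\bm{\sigma}$ is positive precisely when $w_0,\bm{w}$ are real and $\norm{\bm{w}}\le w_0$. In these coordinates the cone of positive matrices is the ``ice-cream'' cone $\{(w_0,\bm{w}):\norm{\bm{w}}\le w_0\}$, and positivity of $\Phi$ means exactly that the map \eqref{34} sends this cone into itself. Both implications will be read off from this geometric picture.

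For the forward direction I would first argue that a positive map is automatically adjoint-preserving: every self-adjoint element is a difference of two positive elements, so $\Phi$ sends self-adjoint elements to self-adjoint elements, and then $\Phi(X^*)=\Phi(X)^*$ for all $X$ by splitting $X$ into real and imaginary parts. Hence the preceding lemma applies and $\bm{T},\bm{\lambda}$ are real. To obtain the inequality, fix any $\bm{w}$ with $\norm{\bm{w}}\le1$ and evaluate $\Phi$ on $A=\mathbb{I}+\bm{w}\bm{\sigma}$, which is positive by (ii). By \eqref{34}, $\Phi(A)=(1+\bm{\lambda w})\mathbb{I}+(\bm{Tw})\bm{\sigma}$ is self-adjoint (reality of $\bm{T},\bm{\lambda}$), so $\Phi(A)\ge0$ forces $\norm{\bm{Tw}}\le 1+\bm{\lambda w}$ by (ii) again, which is the claimed inequality.

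For the converse, assume $\bm{T},\bm{\lambda}$ are real and that $\norm{\bm{Tw}}\le1+\bm{\lambda w}$ holds for all $\norm{\bm{w}}\le1$, and take an arbitrary positive $A=w_0\mathbb{I}+\bm{w}\bm{\sigma}$, i.e. $w_0,\bm{w}$ real with $\norm{\bm{w}}\le w_0$. If $w_0=0$ then $\bm{w}=\bm{0}$ and $\Phi(A)=0\ge0$. If $w_0>0$, apply the hypothesis to the unit-ball vector $\bm{w}/w_0$ and multiply through by $w_0$; using linearity of $\bm{T},\bm{\lambda}$ and homogeneity of the norm this gives $\norm{\bm{Tw}}\le w_0+\bm{\lambda w}$. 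Since $\Phi(A)=(w_0+\bm{\lambda w})\mathbb{I}+(\bm{Tw})\bm{\sigma}$ is self-adjoint, assertion (ii) yields $\Phi(A)\ge0$, so $\Phi$ is positive.

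The step I expect to need the most care is the reduction to the unit ball: the stated condition only ranges over $\norm{\bm{w}}\le1$, whereas positivity must be verified on the entire cone $\{(w_0,\bm{w}):\norm{\bm{w}}\le w_0\}$. The scaling argument resolves this because both $\norm{\bm{Tw}}$ and $w_0+\bm{\lambda w}$ are positively homogeneous of degree one in $(w_0,\bm{w})$, so restricting to the slice $w_0=1$ loses no information. A second point worth making explicit, rather than taking for granted, is that positivity of $\Phi$ by itself already forces $\bm{T},\bm{\lambda}$ to be real; this is what makes the two right-hand conditions jointly equivalent to positivity and not merely sufficient, and it is why the adjoint-preserving observation must be recorded before the appeal to the preceding lemma.
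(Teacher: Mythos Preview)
Your proof is correct and follows essentially the same route as the paper: reduce to the slice $w_0=1$ by homogeneity, apply the characterization (ii) of positivity to $\Phi(\mathbb{I}+\bm{w}\bm{\sigma})$, and declare the converse evident. Your version is in fact more complete than the paper's, since you explicitly justify why positivity forces $\bm{T},\bm{\lambda}$ to be real (via the adjoint-preserving argument and Lemma~3.1) and you spell out the scaling in the reverse direction, both of which the paper leaves implicit.
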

  \begin{proof}
  $\Rightarrow$
      Suppose $\Phi$ is a unital linear map. Take any positive $x=w_0 \mathbb{I}+\bm{w}\cdot \bm{\sigma}$. Without any loss of generality, we may assume that $w_0=1$. Then $\|\bm{w}\|\leq 1$ and $$\Phi(\mathbb{I}+\bm{w}\cdot \bm{\sigma})=(1+\bm{\lambda}\bm{w})\mathbb{I}+(\bm{T}\bm{w})\bm{\sigma}.$$
      Further, the positivity of $\Phi$ implies that \begin{equation}
          \norm{\bm{Tw}}\leq {1+\bm{\lambda w}}.
      \end{equation}
The reverse implication is evident.
\end{proof}

\begin{corollary}
     Let $\Phi$ be a positive unital linear map given by \eqref{34}, then
     \begin{equation}
         \norm{\bm{T}}\leq 1+\norm{\bm{\lambda}},
     \end{equation} and $\bm{T, \lambda}$ are real, and $\norm{\bm{\lambda}}\leq 1.$
\end{corollary}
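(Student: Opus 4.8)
The plan is to read off everything from the Theorem, since the positivity of $\Phi$ is exactly equivalent to the pointwise estimate stated there. First I would invoke the Theorem to conclude immediately that $\bm{T}$ and $\bm{\lambda}$ are real and that
\[
\norm{\bm{Tw}} \leq 1 + \bm{\lambda w} \qquad \text{for all } \norm{\bm{w}} \leq 1 .
\]
The two remaining inequalities are then extracted from this single family of estimates by choosing the test vector $\bm{w}$ appropriately and by passing to suprema.

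For the bound $\norm{\bm{\lambda}} \leq 1$, I would use only the nonnegativity of the left-hand side. Since $\norm{\bm{Tw}} \geq 0$, the displayed inequality forces $1 + \bm{\lambda w} \geq 0$, i.e. $\bm{\lambda w} \geq -1$, for every $\bm{w}$ in the closed unit ball. Taking $\bm{w} = -\bm{\lambda}/\norm{\bm{\lambda}}$ (when $\bm{\lambda} \neq \bm{0}$; the case $\bm{\lambda} = \bm{0}$ is trivial) makes $\bm{\lambda w} = -\norm{\bm{\lambda}}$, so the inequality becomes $1 - \norm{\bm{\lambda}} \geq 0$, which is exactly $\norm{\bm{\lambda}} \leq 1$.

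For the bound $\norm{\bm{T}} \leq 1 + \norm{\bm{\lambda}}$, I would read $\norm{\bm{T}}$ as the operator norm induced by the Euclidean norm, namely $\norm{\bm{T}} = \sup_{\norm{\bm{w}} \leq 1} \norm{\bm{Tw}}$. Bounding the right-hand side of the displayed inequality by Cauchy--Schwarz, $\bm{\lambda w} \leq \norm{\bm{\lambda}}\,\norm{\bm{w}} \leq \norm{\bm{\lambda}}$, gives $\norm{\bm{Tw}} \leq 1 + \norm{\bm{\lambda}}$ uniformly over the unit ball; taking the supremum over $\norm{\bm{w}} \leq 1$ then yields $\norm{\bm{T}} \leq 1 + \norm{\bm{\lambda}}$.

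There is no serious obstacle here: the corollary is a direct consequence of the Theorem, and the heart of the argument is simply optimizing the linear term $\bm{\lambda w}$ over the unit ball in two opposite directions. The only points that merit a moment's care are the interpretation of $\norm{\bm{T}}$ as the Euclidean operator norm (so that taking the supremum over the unit ball is the right move) and the observation that the supremum of the linear functional $\bm{w} \mapsto \bm{\lambda w}$ over the closed unit ball equals $\norm{\bm{\lambda}}$, attained by aligning $\bm{w}$ with $\pm\bm{\lambda}$.
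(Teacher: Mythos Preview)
Your proof is correct and follows essentially the same route as the paper: invoke the Theorem to get $\norm{\bm{Tw}} \leq 1 + \bm{\lambda w}$ on the unit ball, use Cauchy--Schwarz and a supremum to bound $\norm{\bm{T}}$, and use nonnegativity of the right-hand side to bound $\norm{\bm{\lambda}}$. If anything, your version is slightly more explicit in choosing $\bm{w} = -\bm{\lambda}/\norm{\bm{\lambda}}$ to saturate the linear functional, whereas the paper phrases this step via the dual characterization $\norm{\bm{\lambda}} = \sup_{\norm{\bm{w}}\leq 1} |\langle \bm{\lambda}, \bm{w}\rangle|$.
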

\begin{proof}
    From the proof of Theorem 3.2, for all $\|\bm{w}\|\leq 1$, we obtain
    \begin{align}
          \norm{\bm{Tw}}&\leq 1+\bm{\lambda w} \notag \\
                  &\leq 1+ \norm{\bm{\lambda}}\norm{\bm{w}} \quad \text{, by the Cauchy-Schwarz inequality} \notag\\
          &\leq 1+\norm{\bm{\lambda}} \notag
      \end{align}
  which yields $$\norm{\bm{T}}=\sup_{\norm{\bm{w}}\leq1} \norm{\bm{Tw}} \leq 1+\norm{\bm{\lambda}}.$$
      Moreover, from the positivity of $\Phi(x)$ it follows that  $1+\bm{\lambda w}\geq 0$, and so
      \begin{equation}
          \norm{\bm{\lambda w}} \leq 1,
      \end{equation} where,
      $$ \norm{\bm{\lambda}}=\sup_{\norm{\bm{w}}\leq 1} \norm{\innerproduct{\bm{\lambda,w}}}\leq 1.$$
      This completes the proof.
\end{proof}

\begin{example}
    Let $\Phi$ be a positive map given by \eqref{34}.  Assume that
    \begin{equation}
        \bm{T}=\begin{pmatrix}
    a_1 & 0 & 0\\
    0 & a_2 & 0\\
    0 & 0 & a_3
    \end{pmatrix}, \qquad \bm{\lambda}=(\lambda_1,\lambda_2, \lambda_3).
    \end{equation}
    Then
    \begin{equation}
        \norm{\bm{T}}=\sup_{\norm{\bm{w}} \leq 1} \norm{\bm{Tw}}= \sup \norm{(a_1w_1,a_2w_2,a_3w_3)}=\sup_{\norm{\bm{w}}\leq 1} \norm{ \langle \bm{a}, \bm{w} \rangle} \leq \norm{\bm{a}}.
    \end{equation} Hence, $\norm{\bm{T}} \leq \norm{\bm{a}}=\sqrt{|a_1|^2+|a_2|^2+|a_3|^2} \leq 1+\norm{\bm{\lambda}}.$
\end{example}

\section{Unital Kadison-Schwarz Maps on $M_2(\mathbb{C})$}

The aim of this section is to derive conditions that make a positive, unital map to be a KS map. To do this, we substitute a positive, unital map into the KS inequality.
\begin{theorem}\label{KS1}
    Any unital, mapping $\Phi$ given by equation (3.4) is a Kadison-Schwarz map if and only if one has \begin{equation}
        \norm{\bm{Tw}}^2+|w_0+\bm{\lambda w}|^2\leq |w_0|^2+\norm{\bm{w}}^2 + \bm{\lambda} \cdot (\overline{w_0}\bm{w}+w_0\overline{\bm{w}}-i[\bm{w},\overline{\bm{w}}])
    \end{equation}  and
    \begin{align}
    &\norm{i([\bm{Tw},\bm{T\overline{w}}]-\bm{T}[\bm{w},\overline{\bm{w}}])-(\bm{\lambda}\overline{\bm{w}})(\bm{Tw})-(\bm{\lambda}\bm{w})(\bm{T\overline{w}}))} \notag\\
    &\leq |w_0|^2+\norm{\bm{w}}^2-|w_0+\bm{\lambda w}|^2-\norm{\bm{Tw}}^2
    +\bm{\lambda} \cdot (\overline{w_0}\bm{w}+w_0\overline{\bm{w}}-i[\bm{w},\overline{\bm{w}}]).
    \end{align}
    for all $w_0, \bm{w}$. Here $\bm{T}$ and $\bm{\lambda}$ are real,  $[\cdot,\cdot]$ denotes the cross product of vectors,
\end{theorem}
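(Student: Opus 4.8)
The plan is to substitute the explicit form \eqref{34} of $\Phi$ directly into the two Kadison-Schwarz inequalities \eqref{KS1} and translate each operator inequality into the vector language of Section~3. Recall that for $X = w_0\mathbb{I} + \bm{w}\cdot\bm{\sigma}$ we have $X^*X = (|w_0|^2 + \norm{\bm{w}}^2)\mathbb{I} + \bm{u}\cdot\bm{\sigma}$ for a suitable vector $\bm{u}$ obtained from the Pauli multiplication rule $\sigma_j\sigma_k = \delta_{jk}\mathbb{I} + i\sum_l \varepsilon_{jkl}\sigma_l$; carrying this out gives the scalar part $|w_0|^2+\norm{\bm{w}}^2$ and a vector part involving $\overline{w_0}\bm{w}+w_0\overline{\bm{w}}$ together with the cross-product term $-i[\bm{w},\overline{\bm{w}}]$ coming from the imaginary off-diagonal contributions. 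First I would compute $\Phi(X^*X)$ using \eqref{34}: its scalar coefficient becomes $|w_0|^2+\norm{\bm{w}}^2 + \bm{\lambda}\cdot(\overline{w_0}\bm{w}+w_0\overline{\bm{w}}-i[\bm{w},\overline{\bm{w}}])$ and its vector coefficient is $\bm{T}$ applied to that same Pauli vector.

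Next I would expand $\Phi(X)^*\Phi(X)$. Since $\Phi(X) = (w_0+\bm{\lambda w})\mathbb{I} + (\bm{Tw})\cdot\bm{\sigma}$, the same multiplication rule yields a scalar part $|w_0+\bm{\lambda w}|^2 + \norm{\bm{Tw}}^2$ and a vector part built from $\overline{(w_0+\bm{\lambda w})}(\bm{Tw}) + (w_0+\bm{\lambda w})\overline{(\bm{Tw})}$ minus the cross term $i[\bm{Tw},\bm{T\overline{w}}]$. The KS inequality $\Phi(X^*X)-\Phi(X)^*\Phi(X)\geq 0$ then becomes the statement that a $2\times 2$ matrix of the form $\alpha\mathbb{I}+\bm{v}\cdot\bm{\sigma}$ is positive semidefinite. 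By the criterion (ii) recalled in Section~3 (adapted to the possibly-complex $\bm{v}$ that arises here), positivity of $\alpha\mathbb{I}+\bm{v}\cdot\bm{\sigma}$ is equivalent to $\alpha\geq 0$ together with $\norm{\bm{v}}\leq\alpha$. The first condition $\alpha\geq 0$ is exactly the first displayed inequality of the theorem, and the norm condition $\norm{\bm{v}}\leq\alpha$ is the second displayed inequality, once one collects the vector parts and simplifies the difference of the two cross-product terms and the two mixed scalar-vector terms into the combination $i([\bm{Tw},\bm{T\overline{w}}]-\bm{T}[\bm{w},\overline{\bm{w}}]) - (\bm{\lambda}\overline{\bm{w}})(\bm{Tw}) - (\bm{\lambda}\bm{w})(\bm{T\overline{w}})$.

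The main obstacle I expect is bookkeeping the linear-in-$\bm{\lambda}$ contributions to the vector part correctly. Writing $|w_0+\bm{\lambda w}|^2 = |w_0|^2 + \bm{\lambda}\cdot(\overline{w_0}\bm{w}+w_0\overline{\bm{w}}) + |\bm{\lambda w}|^2$, the scalar-part comparison must absorb the $\bm{\lambda}$-linear and $\bm{\lambda}$-quadratic terms so that they match the $\bm{\lambda}$-dependent scalar term coming from $\Phi(X^*X)$; the cancellation that produces the clean right-hand side of the first inequality needs care, especially tracking which terms carry $-i[\bm{w},\overline{\bm{w}}]$ versus the cross product $[\bm{Tw},\bm{T\overline{w}}]$. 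For the vector part, the terms $\overline{(w_0+\bm{\lambda w})}(\bm{Tw})$ and its conjugate must be separated into a piece matching $\bm{T}$ applied to the scalar-vector term of $X^*X$ (which cancels) and a residual piece $-(\bm{\lambda}\overline{\bm{w}})(\bm{Tw})-(\bm{\lambda}\bm{w})(\bm{T\overline{w}})$ that survives in the norm. The second KS inequality $\Phi(X^*X)\geq\Phi(X)\Phi(X^*)$ produces the complex conjugate vector, whose norm coincides with the one above, so it yields no new constraint; I would remark that by symmetry it suffices to check one of the two. Once the algebra is organized, the equivalence of the operator inequalities with the two displayed vector inequalities follows directly from the positivity criterion, completing the proof.
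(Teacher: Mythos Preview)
Your proposal is correct and follows essentially the same route as the paper: compute $x^*x$, apply \eqref{34} to obtain $\Phi(x^*x)$ and $\Phi(x)^*\Phi(x)$, subtract, and invoke the positivity criterion $\alpha\geq 0$, $\norm{\bm v}\leq\alpha$ for matrices of the form $\alpha\mathbb{I}+\bm v\cdot\bm\sigma$. Two minor remarks: the difference $\Phi(x^*x)-\Phi(x)^*\Phi(x)$ is automatically self-adjoint, so the vector part is in fact real and no adaptation of criterion~(ii) is needed; and your observation that the second inequality $\Phi(X^*X)\geq\Phi(X)\Phi(X^*)$ yields the conjugate vector with the same norm (hence no new constraint) is a useful point that the paper's proof leaves implicit.
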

\begin{proof}
    $\Rightarrow$
    Let $x=w_0\mathbb{I}+\bm{w\cdot \sigma} \in M_2(\mathbb{C})$, then $x^*=\overline{w_0}\mathbb{I}+\overline{\bm{w}}\bm{\sigma}$, and \begin{equation}
        x^*x=(|w_0|^2+\norm{\bm{w}}^2)\mathbb{I}+(w_0\overline{\bm{w}}+\overline{w_0}\bm{w}-i[\bm{w},\overline{\bm{w}}])\bm{\sigma}.
    \end{equation}
    Consequently, we have \begin{equation}
        \Phi(x)=\Phi(w_0\mathbb{I}+\bm{w\cdot \sigma})=(w_0+\bm{\lambda w})\mathbb{I}+(\bm{Tw})\bm{\sigma},
    \end{equation} and \begin{equation}
        \Phi(x)^*=(\overline{w_0}+\bm{\lambda}\overline{\bm{w}})\mathbb{I}+(\bm{T\overline{w}})\bm{\sigma}.
    \end{equation} Suppose $\Phi$ is a KS map then $\Phi(x^*x) \geq \Phi(x)^*\Phi(x).$ Using equation (4.3), we can write
    \begin{align}
       \Phi(x^*x)&=\bigg(|w_0|^2+\norm{\bm{w}}^2+\bm{\lambda} \cdot (\overline{w_0}\bm{w}+w_0\overline{\bm{w}}-i[\bm{w},\overline{\bm{w}}])\bigg)\mathbb{I} \notag \\
       &+\bigg(\overline{w_0}\bm{Tw}+w_0\bm{T}\overline{\bm{w}}-i\bm{T}[\bm{w},\overline{\bm{w}}]\bigg)\bm{\sigma}.
    \end{align}
    \begin{align}
\Phi(x)^*\Phi(x)&=\bigg(|w_0+\bm{w\lambda}|^2+\norm{\bm{Tw}}^2\bigg)\mathbb{I} \notag \\
&+\bigg((\overline{w_0}+\bm{\lambda \overline{w}})(\bm{Tw})+(w_0+\bm{\lambda w})(\bm{T\overline{w}})-i[\bm{Tw},\bm{T\overline{w}}]\bigg)\bm{\sigma}.
    \end{align}Subtracting equations (4.6) and (4.7) one gets
    \begin{align}
        &\Phi(x^*x)-\Phi(x)^*\Phi(x)= \notag \\
        &\bigg(|w_0|^2+\norm{\bm{w}}^2-|w_0+\bm{\lambda w}|^2-\norm{\bm{Tw}}^2+\bm{\lambda} \cdot (\overline{w_0}\bm{w}+w_0\overline{\bm{w}}-i[\bm{w},\overline{\bm{w}}])\bigg)\mathbb{I} \notag\\
        &+\bigg(i([\bm{Tw},\bm{T\overline{w}}]-\bm{T}[\bm{w},\overline{\bm{w}}])-(\bm{\lambda}\overline{\bm{w}})(\bm{Tw})-(\bm{\lambda}\bm{w})(\bm{T\overline{w}})\bigg)\bm{\sigma} \geq 0,
    \end{align}where we have used that $\bm{T}$ and $\bm{\lambda}$ are real since $\Phi$ is positive and adjoint-preserving. Hence, for $\Phi$ to be a KS map, we have the conditions given in (4.1) and (4.2).

    $\Leftarrow$ If we assume the conditions (4.1) and (4.2), then the other direction follows directly.
\end{proof}

\begin{remark}  If  $\bm{\lambda}=(0,0,0)$, then the conditions of Theorem \ref{KS1} reduce to the conditions for a bistochastic mapping to be a KS map as was found in \cite{KS,MA2010}.
\end{remark}

\noindent The matrix $\bm{T}$ may be diagonalized by following the treatment carried out in \cite{king_minimal_2001} using singular value decomposition. This allows us to express any $\bm{T=\bm{RS}}$ as a product of a rotation, $\bm{R}$ and a self-adjoint matrix $\bm{S}.$ Furthermore, define the map $\Phi_{S}$ as
\begin{equation}
    \Phi_S(w_0\mathbb{I}+\bm{w\cdot \sigma})=(w_0+\bm{\lambda w})\mathbb{I}+\bm{Sw\cdot \sigma},
\end{equation} and since rotations are carried out by unitary matrices in $M_2(\mathbb{C})$, then there exists $\bm{U}\in M_2(\mathbb{C}) $ such that
\begin{equation}
    \Phi(x)=\bm{U}\Phi_S(x)\bm{U}^*,
\end{equation}for $x \in M_2(\mathbb{C})$. Moreover, every self-adjoint matrix, $\bm{S}$, may be diagonalized using a unitary operator $\bm{W}\in M_2(\mathbb{C})$
\begin{equation}
    \bm{S}=\bm{W}\bm{D}_{k_1,k_2,k_3}\bm{W}^*,
\end{equation}where
\begin{equation}
    \bm{D}_{k_1,k_2,k_3}=\begin{pmatrix}
        k_1 & 0 & 0\\
        0 & k_2 & 0\\
         0 & 0 & k_3\\
    \end{pmatrix},
\end{equation}and $k_1,k_2,k_3 \in \mathbb{C}.$ Thus, combining equation (4.9) and (4.10), we can represent the mapping $\Phi$ by
\begin{equation}
    \Phi(x)=\bm{\Tilde{U}}\Phi_{k_1,k_2,k_3}(x)\bm{\Tilde{U}}^*,
\end{equation}for some unitary $\bm{\Tilde{U}}$. \\

Hence, in what follows, we may characterize KS maps of the form
\begin{equation}\label{KSS}
    \Phi(w_0\mathbb{I}+\bm{w}\cdot \bm{\sigma})=(w_0+\bm{\lambda \cdot w})\mathbb{I}+\sum_{i=1}^3 k_iw_i\sigma_i.
\end{equation} We begin with our condition for $\Phi$ to be a KS map given by (4.2), and substitute $\bm{T}=\bm{D}_{k_2,k_2,k_3}$.

From \eqref{KSS}, we infer that the mapping $\Phi$ depends on six parameters, i.e., $\bm{\lambda}$ and $(k_1,k_2,k_3)$. To check the condition
(4.2) is highly complex. Therefore, for the sake of simplicity, we are going to investigate a specific case where $\bm{\lambda}=(\lambda,0,0)$, and $\bm{k}=(0,k_2,k_3)$, where $|k_2|=|k_3|=k\geq 0$. The choice of particular cases had no physical motivation, but was merely done to adequately simplify  matters, and allow for an estimation of our inequality.

Furthermore, we will re-express our inequality as was done in \cite{KS} in polar form using $w_i=r_ie^{i\alpha_i}$ for  $i=1,2,3.$ In doing so, our inequality (4.2) reduces to
\begin{align}
    &\norm{2(A^*r_2r_3\sin \gamma_1, -Cr_1r_2 \cos \gamma_3, A^{\prime}r_1r_2 \sin \gamma_3+C^{\prime}r_1r_3 )} \notag \\
    &\leq \norm{\bm{w}}^2-2\lambda r_2r_3 \sin \gamma_1- |\lambda w_1|^2-\norm{\bm{Tw}}^2,
\end{align}where \begin{align}
    A^*=k^2, \qquad  A^{\prime}=-k, \qquad C=\lambda k,\qquad C^{\prime}=\lambda k,
\end{align}
and $w_0=0$. This arbitrary choice for $w_0$ was made since the inequality (4.15) is valid for for all $w_0 \in \mathbb{C}$ and $\bm{w}\in \mathbb{C}^3$. Additionally, $\gamma_1=\alpha_2-\alpha_3, \gamma_2=\alpha_1-\alpha_3,$ and $\gamma_3=\alpha_1-\alpha_2$, and $\gamma_1+\gamma_2+\gamma_3=2\pi.$ Moreover, $\gamma_1+\gamma_2+\gamma_3=2(\alpha_1-\alpha_3)=2\gamma_2 \implies\gamma_2=\pi.$\\

The norm on the LHS of inequality (4.15) may be computed, and knowing that $|\lambda|\leq 1$ as well as $\gamma_1+\gamma_3=\pi$, we can estimate our inequality to
\begin{align}
    &2k\bigg(k^2r_2^2r_3^2 \sin^2 \gamma_1 +r_1^2(r_2^2+r_3^2+2r_2r_3|\lambda| \sin \gamma_1)\bigg)^{1/2} \notag \\
    &\leq 1-2|\lambda|r_2r_3 \sin \gamma_1 -\lambda^2 r_1^2-k^2(r_2^2+r_3^2),
\end{align}where we took $\norm{\bm{w}}=1.$ Now, considering the worst case scenario where this inequality holds true, i.e maximizing the LHS and minimizing the RHS, and noting that
 $r_1^2+r_2^2+r_3^2=1 \implies r_2^2+r_3^2=1-r_1^2$, we have
\begin{equation}
    2k\bigg(k^2r_2^2r_3^2+r_1^2(1-r_1^2+2|\lambda|r_2r_3)\bigg)^{1/2}+2|\lambda|r_2r_3+(\lambda^2-k^2)r_1^2 \leq 1-k^2.
\end{equation}We note that $2r_2r_3+r_1^2 \leq r_1^2+r_2^2+r_3^2=1 \quad \therefore \quad 2r_2r_3+r_1^2 \leq 1$. Thus, we may re-express (4.18) in terms of new variables: let $y=2r_2r_3,  x=r_1^2,$ and $|\lambda|=a$, then our inequality becomes
\begin{equation}
    2k\bigg(\frac{k^2}{4}y^2+x(1-x+ay)\bigg)^{1/2}+ay+(a^2-k^2)x\leq 1-k^2,
\end{equation}and is defined on the domain $0 \leq x+y \leq 1,$ where $0\leq x \leq 1$, and $0 \leq y \leq 1.$\\

Let us define \begin{equation}
 F= 2k\bigg(\frac{k^2}{4}y^2+x(1-x+ay)\bigg)^{1/2}+ay+(a^2-k^2)x.
\end{equation} We want to find the critical points of this function in the aforementioned domain. A quick inspection of the partial derivatives of this function allows us to conclude that there are no critical points in the domain of $F$. Instead, we have to examine the boundaries of the domain to try to find the maximum of $F$. There are three such cases to consider
\begin{enumerate}[(i)]
    \item Case 1: $\qquad x=0, \qquad F(0,y)=y(k^2+a)$.
    \item Case 2: $\qquad y=0, \qquad F(x,0)=2k(x(1-x))^{1/2}+(a^2-k^2)x.$
    \item Case 3: $\qquad x+y=1$,
    $$F(y)=2k\bigg(\frac{k^2}{4}y^2+(1-y)(y+ay)\bigg)^{1/2}+ay+(a^2-k^2)(1-y).$$
\end{enumerate}

\subsection{Case 1: $x=0$}

We consider the first case where $x=0$, the function $F$ is given by
\begin{equation}
    F(0,y)=y(k^2+a).
\end{equation}
From the linearity of $F$ we infer that the maximum of $F(0,y)$ on $[0,1]$ is
\begin{equation}
    m_1=\max F(0,y)=F(0,1)=k^2+a.
\end{equation}

\subsection{Case 2: $y=0$}
In the second case, we consider the function
\begin{equation}
    F(x,0)=2k\bigg(x(1-x)\bigg)^{1/2}+(a^2-k^2)x.
\end{equation} Just as before, the objective is to find the maximum of this function via testing for critical points.
\begin{equation}
    \frac{dF}{dx}=\frac{k(1-2x)}{\sqrt{x(1-x)}}+(a^2-k^2)=0 \implies \sqrt{x(1-x)}(a^2-k^2)=-k(1-2x).
\end{equation}It is unclear whether $a^2-k^2$ is positive or negative, so we have to consider both cases.\\

\subsubsection{Subcase 1:$\qquad a^2-k^2>0$}
By inspecting the signs of each term in (4.25), it is easy to see that the LHS is positive, and since $k$ is also positive, it follows that $1-2x<0 \implies x>1/2.$ We proceed to square and expand both sides of (4.25),
\begin{equation}
    x(1-x)(a^2-k^2)^2=k^2(1-4(x-x^2)).
\end{equation}Let $t=x-x^2$, then equation (4.25) reduces to
\begin{equation}
    t(4k^2+(a^2-k^2)^2)=k^2 \qquad \therefore \qquad t=x-x^2=p,
\end{equation}where $p>0$ since $k>0$ and $a^2-k^2>0$. This greatly simplifies our problem to a quadratic in $x$ which may be solved easily
\begin{equation}
    x^2-x+p=0 \implies x=\frac{1\pm \sqrt{1-4p}}{2}.
\end{equation} The validity of these solutions may be readily verified by checking that $1-4p\geq0$. Now, we need to determine which of the solutions lie in our domain. We found earlier that $x>1/2$, therefore $x_c=\frac{1+\sqrt{1-4p}}{2}.$ It may be checked that $x_c$ is indeed a maximum via the second derivative test. Inserting this into our function $F$, we obtain
\begin{equation}
    F(x_c,0)=\frac{1}{2\sqrt{(4k^2+(a^2-k^2)^2)^3}}+\frac{a^2-k^2}{2} \qquad \text{when} \qquad a^2-k^2>0.
\end{equation} Comparing this maximum with the boundary value $F(1)=a^2-k^2,$ allows us to determine the global maximum. It follows that the global maximum is
\begin{equation}
   m_2= \max_{[0,1]}F(x,0)=F(1)=a^2-k^2 \qquad \text{for} \qquad a^2-k^2>0.
\end{equation}
\subsubsection{Subcase 2:$\qquad a^2-k^2<0$}
Following the same treatment as in subcase 1, we find that the critical point of interest is $x_c=\frac{1-\sqrt{1-4p}}{2},$ which is indeed a maximum. Inserting it into $F(x,0)$ and comparing $F(x_c,0)$ to the boundary value allows us to conclude that $F(x_c,0)$ is the global maximum,
\begin{equation}
    m_3=\max_{[0,1]}F(x,0)=F(x_c)=\frac{4k^2-(a^2-k^2)^2}{2\sqrt{(4k^2+(a^2-k^2)^2)^3}}+\frac{a^2-k^2}{2} \qquad \text{for} \qquad a^2-k^2<0.
\end{equation}

\subsection{Case 3: $x+y=1$}The last case that we will consider is when $x+y=1,$ or equivalently $y=1-x$, which may be substituted into our original function $F(x,y)$ to obtain
\begin{equation}
    F(y)=2k\bigg(\frac{k^2}{4}y^2+(1-y)(y+ay)\bigg)^{1/2}+ay+(a^2-k^2)(1-y).
\end{equation} Our objective is to find the critical point of (4.32) via the first derivative which may be re-arranged to give,
\begin{equation}
     k\bigg(\frac{k^2}{2}y+(1+a)(1-2y)\bigg)=\sqrt{y\bigg(\frac{k^2}{4}y+(1-y)(1+a)\bigg)}(a^2-k^2-a).
\end{equation}
Since it is unclear whether $a^2-k^2-a$ is positive or negative, we have to consider both cases. However, the subcase $a^2-k^2-a>0$ may be ruled out, because it presents a contradiction:
\begin{align}
    a^2-k^2&>a\notag\\
    a^2-k^2&>a>a^2 \qquad \text{, because} \quad a \leq 1\notag\\
    a^2-k^2&>a^2 \notag\\
    -k^2>0 \notag.
\end{align} Thus we will only consider $a^2-k^2-a<0$. From equation (4.33), we know that $k>0$, $a^2-k^2-a<0$, and $\sqrt{y\bigg(\frac{k^2}{4}y+(1-y)(1+a)\bigg)}>0$, which means that
\begin{equation}
   \frac{k^2}{2}y+(1-2y)(1+a)<0 \qquad \implies
     \frac{1+a}{2+2a-\frac{k^2}{2}}<y.
\end{equation}
Equation (4.33) may be expanded and reduced to a quadratic
\begin{equation}
          \bigg(\frac{k^2}{4}-(1+a)\bigg)y^2+(1+a)y-q=0,
\end{equation} with the following solutions
 \begin{equation}
     y=\frac{-(1+a) \pm \sqrt{(1+a)^2+4q(\frac{k^2}{4}-(1+a))}}{2\bigg(\frac{k^2}{4}-(1+a)\bigg)},
 \end{equation}where
 \begin{equation}
     q=\frac{k^2(1+a)}{(a^2-k^2-a)^2+k^2(4a+4-k^2)}.
 \end{equation}The solutions to the quadratic are
   are valid only when $k^2 \leq 4(1+a)$. Moreover, since $\frac{1+a}{2+2a-\frac{k^2}{2}}<y$, we deduce that
 \begin{equation}
    y_c=\frac{-(1+a) + \sqrt{(1+a)^2+4q(\frac{k^2}{4}-(1+a))}}{2\bigg(\frac{k^2}{4}-(1+a)\bigg)}.
 \end{equation} is the critical point that lies in our domain. Thus, we obtain that the maximum in this case is
 \begin{align}
    m_4=F(0,y_c)&=\frac{2k^3(1+a)}{\sqrt{(a^2-k^2-a)^2+k^2(4+4a-k^2)}}-\frac{(1+a)(a^2-k^2-a)}{2\bigg(\frac{k^2}{4}-(1+a)\bigg)}+a^2-k^2 \notag \\
    &+ \frac{\sqrt{\frac{\bigg(a(a(a-1)^2+2k^2(3-a))\bigg)(1+a)^2+k^4(1+a)}{(a^2-k^2-a)^2+k^2(4(1+a)-k^2)}}}{2\bigg(\frac{k^2}{4}-(1+a)\bigg)} \quad \text{when} \qquad a^2-k^2-a<0.
\end{align}
Unfortunately, it is unclear whether this is a global maximum, since an algebraic comparison of it with the boundary value led to a mixture of positive and negative terms.\\

\noindent The maximum values of our function $F$, can be summarized as follows:
\begin{enumerate}[(i)]
     \item $0<a^2-k^2<a, \quad \max \{m_1,m_2,m_4\} \leq 1-k^2;$
      \item $a^2-k^2<0, \quad \max \{m_1,m_3,m_4\} \leq 1-k^2.$
      \end{enumerate}

It may be easily checked that $m_1>m_2$ and $m_1>m_3$ which allows us to reduce our two cases into a single condition. Hence, we have the following.
\begin{theorem}
    Let $\bm{\lambda}=(\lambda,0,0), a=|\bm{\lambda}|$, $a \in [0,1], \bm{k}=(0,k_2,k_3)$ where $ |k_2|=|k_3|=k,$ such that $ k\geq 0$ and $k \leq \frac{1}{\sqrt{2}}(1+a)$. If \begin{equation}
         \max \{m_1,m_4\} \leq 1-k^2, \qquad  a^2-k^2<a,
    \end{equation}
where $m_1=k^2+a$, and $m_4$ is given by equation (4.39), then $\Phi_{(0,k_2,k_3)}$ is a KS map.

\end{theorem}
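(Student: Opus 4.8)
The plan is to read this theorem as the packaging of the boundary analysis of $F$ from Cases~1--3 into a single sufficient criterion, so the first task is to pin down exactly what must be verified. By Theorem~\ref{KS1}, the unital map $\Phi_{(0,k_2,k_3)}$ — which is adjoint-preserving by Lemma~3.1 since $\bm{\lambda}$ and $\bm{k}$ are real — is a KS map if and only if the scalar inequality (4.1) and the vector inequality (4.2) hold for all $w_0,\bm{w}$. I would first observe that (4.2) already implies (4.1): the right-hand side of (4.2) is precisely the $\mathbb{I}$-coefficient $S$ of $\Phi(x^*x)-\Phi(x)^*\Phi(x)$, while its left-hand side is the norm of the $\bm{\sigma}$-coefficient, a nonnegative quantity; hence (4.2) forces $S\ge 0$, which is exactly (4.1). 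Thus it suffices to secure (4.2) for all $w_0,\bm{w}$. Positivity of $\Phi$ need not be treated separately, since every KS map is positive; the bound $k\le\frac{1}{\sqrt2}(1+a)$ plays the technical role flagged in Case~3, namely it yields $k^2\le 4(1+a)$ and so keeps the discriminants and square roots in the boundary analysis real.

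Next I would specialize (4.2) to $\bm{\lambda}=(\lambda,0,0)$ and $\bm{T}=\bm{D}_{0,k_2,k_3}$ with $|k_2|=|k_3|=k$, pass to polar coordinates $w_i=r_ie^{i\alpha_i}$, and exploit homogeneity to set $w_0=0$ and $\norm{\bm{w}}=1$. Using $\gamma_1+\gamma_2+\gamma_3=2\pi$ with $\gamma_2=\pi$ together with $|\lambda|\le1$, the inequality (4.2) is majorized by (4.17); the substitution $x=r_1^2,\ y=2r_2r_3,\ a=|\lambda|$ then turns it into the scalar inequality $F(x,y)\le 1-k^2$ on the triangle $0\le x+y\le 1,\ x,y\ge 0$. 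Since every step of this reduction over-estimates the left side and under-estimates the right side of (4.2), the single scalar bound $\max_{\text{triangle}}F\le 1-k^2$ is \emph{sufficient} for (4.2), and hence for the KS property. The theorem is therefore reduced to the optimization statement $\max F\le 1-k^2$.

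I would then maximize $F$ over the triangle exactly as in Cases~1--3. The partial derivatives show $F$ has no interior critical point, so the maximum is attained on the boundary, which is the union of the edges $x=0$, $y=0$, and $x+y=1$. On $x=0$ the function is linear and gives $m_1=F(0,1)=k^2+a$. On $y=0$ one solves $F'=0$ through the quadratic $x^2-x+p=0$ and compares the critical value with the endpoint $F(1)$, producing the edge maximum $m_2=a^2-k^2$ when $a^2-k^2>0$ and the value $m_3$ (Subcase~2) when $a^2-k^2<0$. On $x+y=1$ one reduces $F'=0$ to a quadratic in $y$, discards the root violating $\frac{1+a}{2+2a-k^2/2}<y$, and evaluates to obtain $m_4$ (equation (4.39)); here the hypothesis $a^2-k^2<a$ is used to guarantee $a^2-k^2-a<0$, the regime in which $y_c$ and $m_4$ are the governing data, the opposite sign having been shown contradictory. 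Consequently $\max_{\text{triangle}}F=\max\{m_1,\,m_2\text{ or }m_3,\,m_4\}$.

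Finally I would collapse the edge-2 contribution into $m_1$. The identity $m_1-m_2=2k^2+a(1-a)\ge 0$ for $a\in[0,1]$ gives $m_1\ge m_2$, and a similar (slightly longer) estimate gives $m_1\ge m_3$; hence $\max F=\max\{m_1,m_4\}$. Therefore the hypothesis $\max\{m_1,m_4\}\le 1-k^2$ yields $\max F\le 1-k^2$, which establishes (4.2) for all $w_0,\bm{w}$ and, by Theorem~\ref{KS1}, proves that $\Phi_{(0,k_2,k_3)}$ is a KS map. The main obstacle I anticipate lies in Case~3: confirming rigorously that the selected root $y_c$ lies in the domain and that the evaluated $m_4$ is genuinely the maximum on the edge $x+y=1$ — the text itself notes that comparing $m_4$ with the endpoint values produces terms of mixed sign — together with carefully verifying the dominations $m_1\ge m_2$ and $m_1\ge m_3$ that reduce the three-edge maximum to $\max\{m_1,m_4\}$.
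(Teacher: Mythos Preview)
Your proposal follows essentially the same route as the paper: the analysis in Section~4 leading up to the theorem \emph{is} the proof, and you have reproduced its structure faithfully (specialize (4.2), pass to polar variables, reduce to $F(x,y)\le 1-k^2$ on the triangle, rule out interior critical points, examine the three boundary edges to produce $m_1,m_2,m_3,m_4$, and then absorb $m_2,m_3$ into $m_1$), even adding the clarifying observation that (4.2) already forces (4.1). One small correction worth making when you write it out: setting $w_0=0$ is not a consequence of homogeneity---degree-two homogeneity in $(w_0,\bm w)$ only lets you normalize $|w_0|^2+\|\bm w\|^2$, not kill $w_0$---but rather follows because for $\bm\lambda=(\lambda,0,0)$ and $\bm T=\mathrm{diag}(0,k_2,k_3)$ the $w_0$-contributions cancel identically on both sides of (4.2); the paper's own justification of this step is no more explicit than yours.
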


Theorem 4.6 is a particular case of the map $\Phi$ where $\bm{\lambda}=(\lambda,0,0)$ and $\bm{k}=(0,k_2,k_3)$, where $|k_2|=|k_3|=k \geq 0$. $|\lambda|=a$ comes from re-expressing variables in equation (4.18), and the conditions $|\lambda|\leq 1$ and $k \leq \frac{1}{\sqrt{2}}(1+a)$ are a consequence of Corollary 3.3. It follows that $\Phi_{(0,k_2,k_3)}$ is a KS map only when the condition (4.39) is met. This is a simplified and re-expressed version of the original condition for $\Phi_{(k_1,k_2,k_3)}$ to be a KS map given by (4.19). We note that $a^2-k^2<a$ comes from Case 3 that we considered where we wanted to find a maximum for the function $F(x,1-x)$. The expressions $m_1$ and $m_4$ are maximas of the function $F(x,y)$ given by (4.20).
\section{Examples}
In this section, we provide two examples of the unital KS maps that were described in Theorem 4.2. In particular, We consider the cases where $a=1$ and $a=\frac{1}{2}.$
\begin{example}
    Let the conditions given in Theorem 4.2 be satisfied, and consider the case where $a=1$. Then our expressions for $m_1$ and $m_4$ reduce to
    \begin{align}
        m_1&=k^2+1;\\
        m_4&= \frac{(k^2-8)(\sqrt{2}k^2+2)+4(2k^2+2^{-1/4}\sqrt{k(8+k^2)})}{2(k^2-8)}.
    \end{align}

   If $m_1$ is a maximum, we obtain a contradiction
    \begin{align}
        k^2+1& \leq 1-k^2 \notag\\
        2k^2& \leq 0, \notag
        \end{align}therefore $m_1$ is not a maximum when $a=1$. Instead, $m_4$ is a maximum, and the inequality (4.40) becomes
        \begin{equation}
        \frac{(k^2-8)(\sqrt{2}k^2+2)+4(2k^2+2^{-1/4}\sqrt{k(8+k^2)})}{2(k^2-8)}  \leq 1-k^2 .
    \end{equation}
   When plotted, the inequality (5.3) gives rise to the shaded region where $\Phi_{(0,k_2,k_3)}$ is a KS map as seen in Figure 1.
   \begin{figure}[h]
\centering
\includegraphics[width=0.7\textwidth]{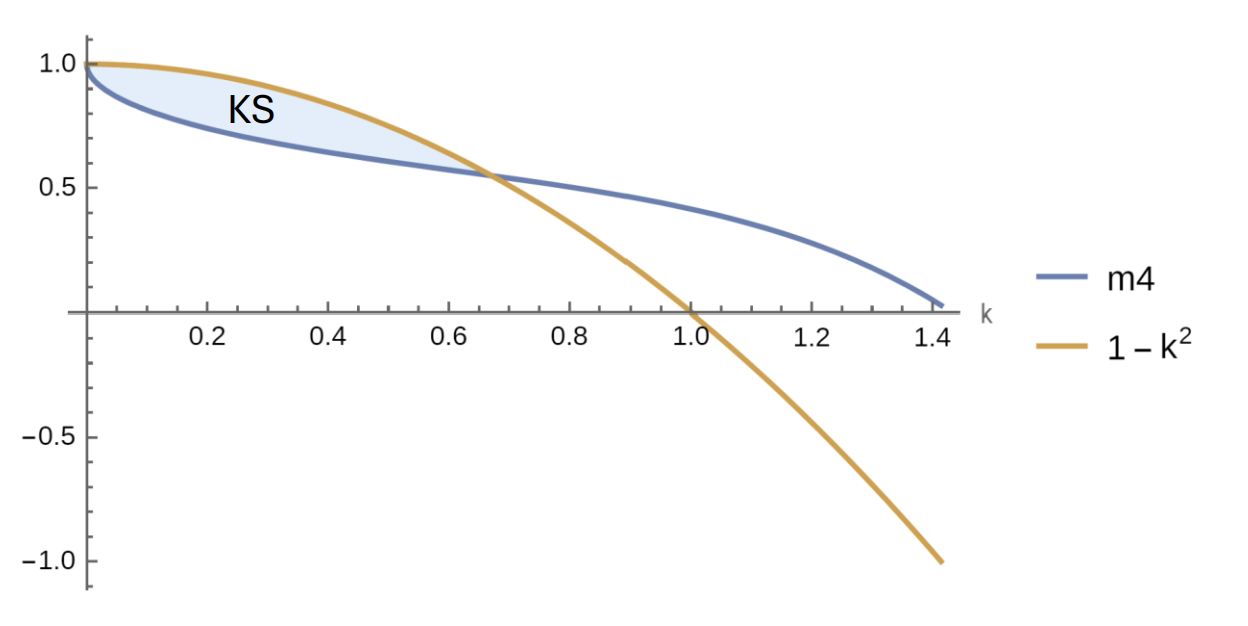}
\caption{This figure illustrates the KS region defined when the parameter $a$ is chosen to have a value of $a=1$, and the LHS of inequality (5.3) given by the expression $m_4$ is a maximum. The intersection of the LHS of the inequality $m_4$, and the RHS, $1-k^2$ define the region where $\Phi_{(0,k_2,k_2)}$ is a KS map.}\label{fig1}
\end{figure}
\end{example}
\begin{example}
     Let the conditions given in Theorem 4.2 be satisfied, and consider the case where $a=\frac{1}{2}$. Then our expressions for $m_1$ and $m_4$ reduce to
     \begin{align}
         m_1&=k^2+\frac{1}{2}, \notag\\
         m_4&=\frac{3k^3}{\sqrt{(k^2+\frac{1}{4})^2+k^2(6-k^2)}}+\frac{2\sqrt{\frac{\frac{3}{4}\bigg((k^2+\frac{1}{4})^2+k^2(2-k^2)\bigg)+\frac{3}{2}k^4}{(k^2+\frac{1}{4})^2+k^2(6-k^2)}}+3(k^2+\frac{1}{4})}{k^2-6}+\frac{1}{4}-k^2.
     \end{align}
     To find the maximum between the two we plot them. As shown in Figure 2, $m_1$ is larger than $m_4$ for all $k$ belonging to our domain.
    \begin{figure}[h]
\centering
\includegraphics[width=0.7\textwidth]{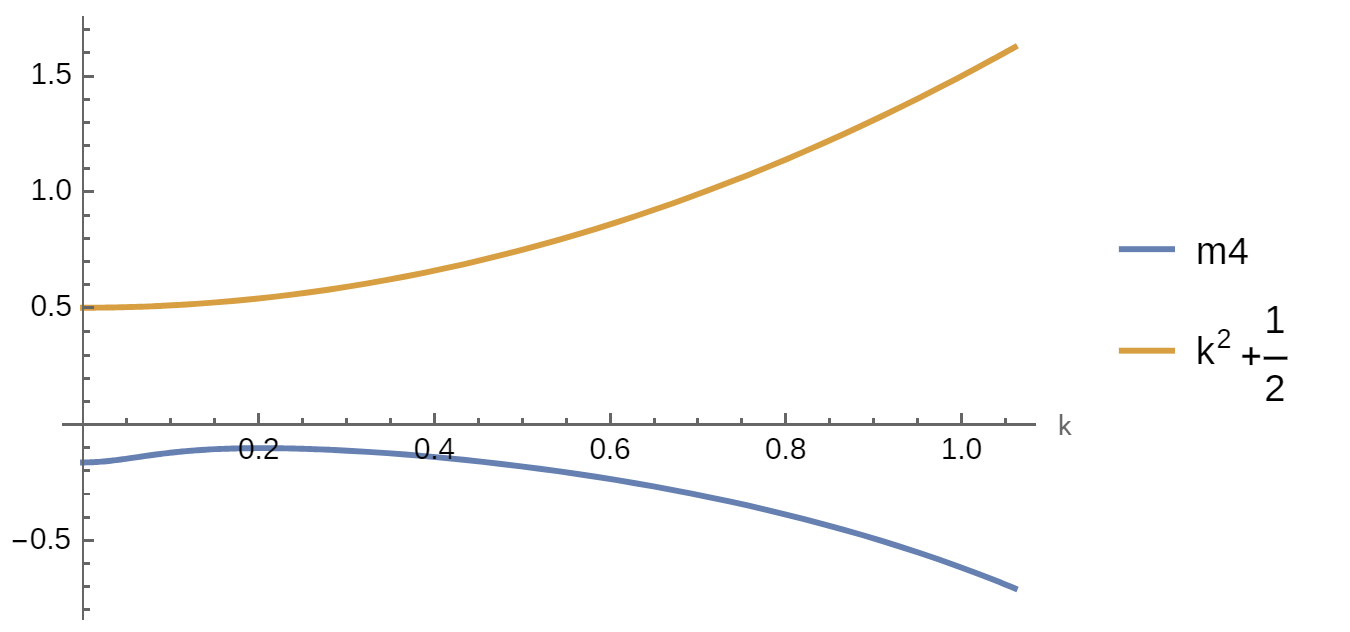}
\caption{This figure represents a plot of the two expressions $m_1$ and $m_4$. Due to their complexity, plotting them easily allows us to determine that $m_1=k^2+1/2$ is the maximum between the two $\{m_1,m_4\}$ for all $k$ when the parameter $a$ has the value $a=\frac{1}{2}$.}\label{fig1}
\end{figure} Thus, $m_1$ is the maximum when $a=\frac{1}{2}$, and our inequality (4.40) becomes
\begin{equation}
    k^2+\frac{1}{2}\leq 1-k^2.
\end{equation}

We can plot equation (5.5) to find the region where $\Phi_{(0,k_2,k_3)}$ is a KS map as illustrated in Figure 3.
 \begin{figure}[h]
\centering
\includegraphics[width=0.7\textwidth]{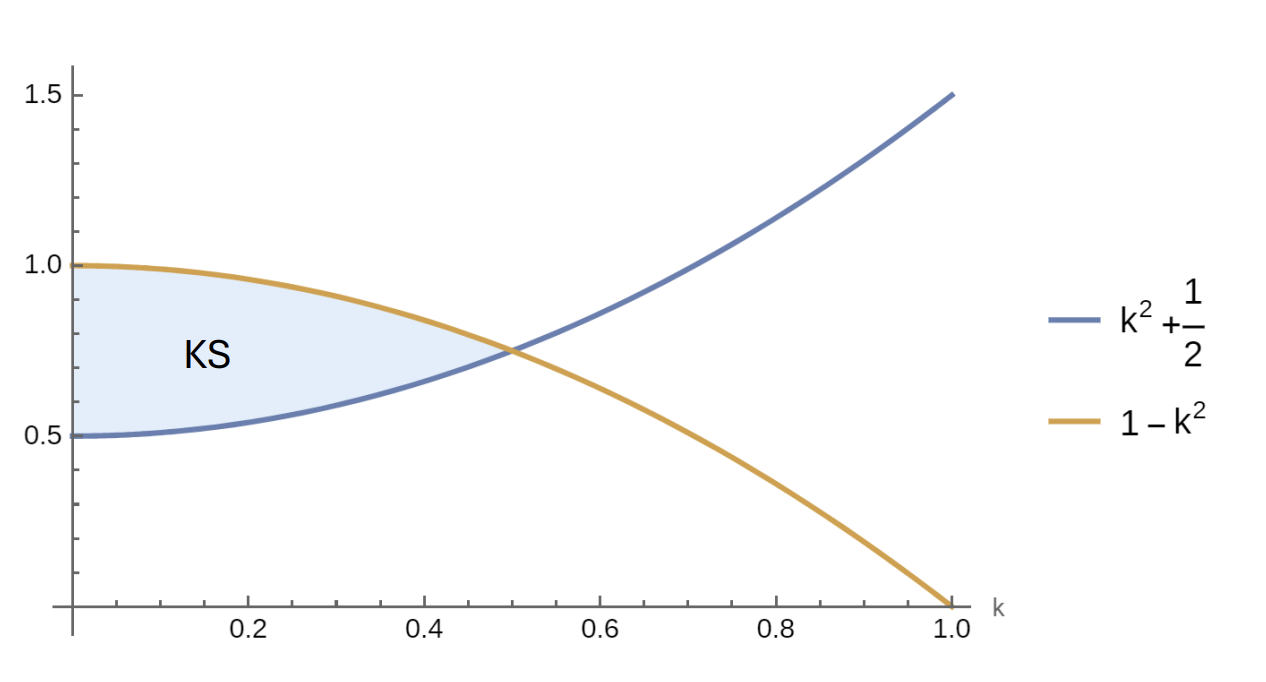}
\caption{This figure illustrates the KS region defined when the parameter $a$ is chosen to have a value of $a=1/2$, and the LHS of inequality (5.5) given by the expression $m_1=k^2+1/2$ is a maximum. The intersection of the LHS of the inequality $m_1$, and the RHS, $1-k^2$ define the region where $\Phi_{(0,k_2,k_2)}$ is a KS map.}\label{fig1}
\end{figure}
Hence, when $a=\frac{1}{2}$, $m_1>m_4$ for all $k\leq \frac{3\sqrt{2}}{4},$ and a KS region is defined for $\Phi_{(0,k_2,k_3)}$.
\end{example}

\section{Kadison–Schwarz Maps and Entanglement Witnesses}

This section provides a relation between KS maps and the entanglement witnesses.

\begin{proposition}
Assume that $\Phi: M_2(\mathbb{C}) \to M_2(\mathbb{C})$ is a unital KS map which is not completely positive. Then the map
\[
W_\Phi := (I \otimes \Phi)(|\psi^+\rangle\langle\psi^+|), \quad \text{where} \quad |\psi^+\rangle = \frac{1}{\sqrt{2}}(|00\rangle + |11\rangle),
\]
is an \emph{entanglement witness}, i.e.,
\begin{enumerate}
    \item $\mathrm{Tr}[W_\Phi \rho_{\mathrm{sep}}] \geq 0$ for all separable states $\rho_{\mathrm{sep}} \in M_2(\mathbb{C}) \otimes M_2(\mathbb{C})$,
    \item There exists an entangled state $\rho_{\mathrm{ent}}$ such that $\mathrm{Tr}[W_\Phi \rho_{\mathrm{ent}}] < 0$.
\end{enumerate}
\end{proposition}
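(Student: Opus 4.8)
The plan is to invoke the Choi--Jamio\l{}kowski isomorphism, which identifies the positivity-detection power of the map $\Phi$ with the witnessing properties of the operator $W_\Phi$. First I would recall that the standard connection between positive maps and entanglement witnesses is built on the identity
\begin{equation}
\mathrm{Tr}\big[(I\otimes\Phi)(|\psi^+\rangle\langle\psi^+|)\,\rho\big]
= \tfrac{1}{2}\,\mathrm{Tr}\big[(I\otimes\Phi^\dagger)(\rho)\,|\psi^+\rangle\langle\psi^+|\big]\, ,
\end{equation}
where $\Phi^\dagger$ is the adjoint map with respect to the Hilbert--Schmidt inner product. Since $\Phi$ is a KS map it is in particular positive, and the key fact (from the hierarchy $CP \subset KS \subset P_1$ stated in Section~2) is that $\Phi$ being positive but not completely positive is exactly what makes $W_\Phi$ a nontrivial entanglement witness. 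So the two items to establish are: (1) $W_\Phi$ is Hermitian but \emph{not} positive semidefinite, and (2) $W_\Phi$ has nonnegative expectation on every separable state.

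For item (1), I would argue that positivity of $W_\Phi$ as an operator is equivalent, via the Choi--Jamio\l{}kowski correspondence, to complete positivity of $\Phi$: indeed $W_\Phi$ is (up to the normalization factor $1/2$) precisely the Choi matrix of $\Phi$, and Choi's theorem states $\Phi$ is completely positive if and only if its Choi matrix is positive semidefinite. Since by hypothesis $\Phi$ is \emph{not} completely positive, $W_\Phi \not\geq 0$, so it has at least one negative eigenvalue; let $|\chi\rangle$ be a corresponding eigenvector, and set $\rho_{\mathrm{ent}} = |\chi\rangle\langle\chi|$. This immediately gives $\mathrm{Tr}[W_\Phi\,\rho_{\mathrm{ent}}] < 0$, establishing item (2) of the proposition's conclusion, and shows $\rho_{\mathrm{ent}}$ must be entangled.

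For the separability bound (item (1) of the conclusion), I would take an arbitrary separable state $\rho_{\mathrm{sep}} = \sum_j p_j\,\rho_j^{A}\otimes\rho_j^{B}$ with $p_j \geq 0$ and $\sum_j p_j = 1$, and use linearity to reduce to a single product term $\rho^A\otimes\rho^B$. Applying the trace identity above, the expectation becomes a sum of terms of the form $\mathrm{Tr}[\rho^A (\Phi(\rho^B))^{T}]$ up to the overlap structure of $|\psi^+\rangle$; more directly, the matrix element $\langle\psi^+|(\rho^A\otimes\Phi(\rho^B))|\psi^+\rangle$ evaluates to $\tfrac{1}{2}\mathrm{Tr}[(\rho^A)^T\,\Phi(\rho^B)]$, and since $\rho^B \geq 0$ and $\Phi$ is positive we have $\Phi(\rho^B)\geq 0$, while $(\rho^A)^T \geq 0$ as well, so the trace of a product of two positive semidefinite operators is nonnegative. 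Summing over $j$ with nonnegative weights preserves the inequality, yielding $\mathrm{Tr}[W_\Phi\,\rho_{\mathrm{sep}}]\geq 0$.

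The main obstacle, and the step requiring the most care, is the bookkeeping in the trace identity connecting $(I\otimes\Phi)(|\psi^+\rangle\langle\psi^+|)$ to the action of $\Phi$ on the $B$-factor together with the transposition arising from the maximally entangled vector; getting the transpose placements and the factor of $\tfrac{1}{2}$ correct is where sign and normalization errors creep in, and it is essential that only \emph{positivity} of $\Phi$ (guaranteed by the KS property) is used for the separability bound, while \emph{failure of complete positivity} is used separately for the existence of a detected entangled state. Everything else is standard linear algebra once the isomorphism is set up correctly.
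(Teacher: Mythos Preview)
Your proposal is correct and follows essentially the same route as the paper: Choi's theorem is invoked to show that failure of complete positivity makes $W_\Phi$ non-positive (yielding the entangled state detected), while mere positivity of $\Phi$ (via the KS hypothesis) gives the nonnegativity on separable states through the duality/trace identity. If anything, your version is more careful about the appearance of the Hilbert--Schmidt adjoint $\Phi^\dagger$ in that identity, whereas the paper writes $\Phi$ on both sides (tacitly using that $\Phi$ positive $\Leftrightarrow \Phi^\dagger$ positive); just make sure in your final write-up that the $\Phi$ appearing in $\langle\psi^+|(\rho^A\otimes\Phi(\rho^B))|\psi^+\rangle$ is consistently replaced by $\Phi^\dagger$.
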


\begin{proof}
We know that a state $\rho \in M_2(\mathbb{C}) \otimes M_2(\mathbb{C})$ is separable if and only if $(I \otimes \Lambda)(\rho) \geq 0$ for all positive maps $\Lambda: M_2(\mathbb{C}) \to M_2(\mathbb{C})$.

It is clear that $W_\Phi = (I \otimes \Phi)(|\psi^+\rangle\langle\psi^+|)$ is a Hermitian operator on $M_2(\mathbb{C}) \otimes M_2(\mathbb{C})$.

Let $\rho_{\mathrm{sep}} = \sum_i p_i \rho_i^A \otimes \rho_i^B$ be a separable state. The positivity of $\Phi$ implies that
\[
(I \otimes \Phi)(\rho_{\mathrm{sep}}) = \sum_i p_i \rho_i^A \otimes \Phi(\rho_i^B) \geq 0.
\]
Then
\[
\mathrm{Tr}[W_\Phi \rho_{\mathrm{sep}}] = \mathrm{Tr}[(I \otimes \Phi)(|\psi^+\rangle\langle\psi^+|)\rho_{\mathrm{sep}}] = \mathrm{Tr}[(I \otimes \Phi)(\rho_{\mathrm{sep}}) |\psi^+\rangle\langle\psi^+|] \geq 0.
\]
Hence, $W_\Phi$ is positive on all separable states.

Due to the condition $\Phi$ is not completely positive, then by Choi's theorem, the Choi matrix $W_\Phi$ is not positive semidefinite. Therefore, there exists a vector $\eta \in \mathbb{C}^2 \otimes \mathbb{C}^2$ such that
\[
\langle \eta, W_\Phi \eta \rangle < 0.
\]
Define the pure state $\rho_{\mathrm{ent}} := |\eta\rangle\langle\eta|$. Then
\[
\mathrm{Tr}[W_\Phi \rho_{\mathrm{ent}}] = \langle \eta, W_\Phi \eta \rangle < 0,
\]
so $W_\Phi$ detects entanglement in $\rho_{\mathrm{ent}}$. Therefore, $W_\Phi$ is an entanglement witness.

\end{proof}

\begin{example}

We present an explicit example of a unital Kadison–Schwarz map \( \Phi: M_2(\mathbb{C}) \to M_2(\mathbb{C}) \) which is positive but not completely positive, and show that the associated Choi matrix \( W_\Phi := (I \otimes \Phi)(|\psi^+\rangle\langle\psi^+|) \) is not positive semidefinite. Hence, \( W_\Phi \) serves as an entanglement witness.
Consider the  mapping \( \Phi_{(0,k,k)}\)
with parameters: $\lambda = 1$, $k=0.6$.

This map is unital and satisfies the Kadison–Schwarz inequality by construction (as it falls into the admissible region identified in Theorem 4.6). However, it is not completely positive.

We compute the Choi matrix:
\[
W_\Phi := (I \otimes \Phi)(|\psi^+\rangle\langle\psi^+|), \quad \text{where } |\psi^+\rangle = \frac{1}{\sqrt{2}}(|00\rangle + |11\rangle).
\]
This results in a Hermitian matrix in \( M_2(\mathbb{C}) \otimes M_2(\mathbb{C}) \). Using numerical computation, we find that \( W_\Phi \) has the following eigenvalues:
\[
\mathrm{Spec}(W_\Phi) \approx \{1.28102,\ 1.00000,\ 0.00000,\ -0.28102\}.
\]
The presence of a negative eigenvalue implies that \( W_\Phi \ngeq 0 \), so \( \Phi \) is not completely positive. However, as established earlier, \( \Phi \) is a unital KS map. Therefore, \( W_\Phi \) is an entanglement witness.

This example provides a concrete realization of the general result: any unital KS map that is not completely positive yields an entanglement witness via its Choi matrix.
\end{example}

\section{Conclusion}

In this paper, we described positive unital maps and provided a matrix representation for them similar to the work of Ruskai \cite{king_minimal_2001} and Mukhamedov \cite{MA2010}, specifically focusing on unital but non-trace preserving maps. Moreover, we established a theorem that characterizes positivity of unital linear maps in terms of the components of our matrix representation: $\norm{\bm{T}} \leq 1+\bm{\lambda w}$, where the matrix $\bm{T}$, and the vector $\bm{\lambda}$ were both real. Our main result was establishing an inequality that must be satisfied by all positive, unital KS maps. When the particular parameter $\bm{\lambda}$ was set to be the zero vector, we were able to retrieve the results from \cite{KS,MA2010}. Hence, our work was consistent with the earlier works done on bistochastic KS maps. Moreover, we provided two examples of unital KS maps, when $a=1$, and $a=\frac{1}{2}$. The results in this paper are a step towards understanding the structure of positive maps, particularly ones that adhere to the KS inequality. The KS inequality being a weaker condition of positivity than complete positivity may be key to detecting entanglement in mixed states. By discarding the trace-preservation property, we were able to provide a description of a wider class of positive, unital KS maps than the known bistochastic maps that was not previously given in the literature. Moreover, replacing complete positivity with the KS property enables a better description of the asymptotic behavior of open quantum systems. However, this was not a full description since for the sake of simplicity, we considered particular examples. It would be interesting to explore unital, non trace-preserving KS maps in $M_3(\mathbb{C})$, and apply our results to quantum dynamical systems. We point out that higher order bistochastic KS maps have been recently studied in \cite{CMH20,Sun}. In \cite{CMH20}, possible applications of KS maps in quantum information have been highlighted.

\section*{Acknowledgments}
The second named author (F.M.) thanks to the UAEU UPAR Grant No. G00004962.

\section*{Conflicts of Interest} The authors declare that they have no conflicts of interest.

\section*{Data Availability} Not applicable.

\end{document}